\documentclass{amsart}

\usepackage[T1]{fontenc}
\usepackage[utf8]{inputenc}
\usepackage{amsmath,amssymb,amsthm,mathtools}
\usepackage{hyperref}
\hypersetup{hidelinks}

\newtheorem{theorem}{Theorem}[section]
\newtheorem{proposition}[theorem]{Proposition}
\newtheorem{lemma}[theorem]{Lemma}
\newtheorem{corollary}[theorem]{Corollary}

\theoremstyle{definition}
\newtheorem{example}[theorem]{Example}
\newtheorem{remark}[theorem]{Remark}

\DeclareMathOperator{\spn}{span}
\newcommand{\N}{\mathbb N}
\newcommand{\C}{\mathbb C}
\newcommand{\fall}[2]{(#1)_{#2}}

\begin{document}

\title[Finite-term recurrences in a Bochner--Krall family]
{Finite-term recurrences in a generalized Bochner--Krall family}

\author[L.~M.~Anguas]{L.~M.~Anguas}
\address{Saint Louis University, Madrid Campus, Avenida del Valle 34, 28003 Madrid, Spain}
\email{luismiguel.anguas@slu.edu}

\author[D.~Barrios Rolan\'ia]{D.~Barrios Rolan\'ia}
\address{ETSI Industriales, Universidad Polit\'ecnica de Madrid,
C/Jos\'e Guti\'errez Abascal 2, 28006 Madrid, Spain}
\email{dolores.barrios.rolania@upm.es}

\author[B.~Shapiro]{B.~Shapiro}
\address{Department of Mathematics, Stockholm University, SE-106 91, Stockholm, Sweden}
\email{shapiro@math.su.se}

\author[M.~Tater]{M.~Tater}
\address{Department of Theoretical Physics, Nuclear Physics Institute,
Academy of Sciences, 250\,68 \v{R}e\v{z} near Prague, Czech Republic}
\email{tater@ujf.cas.cz}

\subjclass[2020]{Primary 42C05, 34A30; Secondary 33C45, 47E05}
\keywords{generalized Bochner--Krall problem, exactly solvable operator,
polynomial eigenfunction, finite-term recurrence, $d$-orthogonality}

\begin{abstract}
We classify the differential operators
\(T=z^j\partial_z^j+z^m\partial_z^\ell\), where
\(0\le m<\ell\) and \(1\le j<\ell\), whose monic eigenpolynomials satisfy a
finite-term recurrence relation.
Writing \(k=\ell-m\), such a recurrence exists if and only if \(j=1\) and
\(k\mid\ell\).  In that case we determine all recurrence coefficients in
closed form and prove that the associated difference operator has order
\(\ell\).  We also give an explicit factorization showing that every
admissible operator is of Type~(2) in Conjecture~1.10 of
Horozov--Shapiro--Tater; the equality of the differential and difference
orders is the conclusion predicted by their Conjecture~1.9.
\end{abstract}

\maketitle

\section{Introduction}

The algebraic higher-order Bochner--Krall problem asks which differential
operators possess polynomial eigenfunctions that also satisfy a recurrence
relation with a fixed number of terms.  A general framework and several
conjectures were proposed in \cite{HoShTa}; see also \cite{Be} for the
asymptotic theory of polynomial eigenfunctions of exactly solvable
differential operators, and \cite{Segundo} for recurrence methods based on
the quantities \(\delta_n^{(r)}\).

In this paper we consider the three-parameter family
\begin{equation}\label{eq:T}
T=T_{j,m,\ell}:=z^j\partial_z^j+z^m\partial_z^\ell,
\qquad 0\le m<\ell,\quad 1\le j<\ell,
\end{equation}
and we set
\[
k:=\ell-m\in\{1,\dots,\ell\}.
\]

Since \(T\) maps \(z^n\) to \(\fall{n}{j}z^n+\fall{n}{\ell}z^{n-k}\), it is
exactly solvable: it preserves the degree filtration of \(\C[z]\) and is
triangular in the monomial basis.  For \(n\ge j\) there is a unique monic
eigenpolynomial \(P_n\) of degree \(n\) (Lemma~\ref{lem:eigenpolys}), while
for \(n<j\) every monic polynomial of degree \(n\) is an eigenpolynomial,
with eigenvalue \(0\).  We remove the latter ambiguity by imposing the
normalization
\begin{equation}\label{eq:normalization}
P_n(z)=z^n,\qquad 0\le n<j.
\end{equation}
This costs nothing; see Remark~\ref{rem:normalization}.

Since \(\{P_n\}_{n\ge0}\) is a basis of \(\C[z]\) and \(\deg(zP_n)=n+1\),
there are unique scalars \(\alpha_{n,R}\) with
\begin{equation}\label{eq:alpha-def}
zP_n=\sum_{R=0}^{n+1}\alpha_{n,R}P_R,
\qquad \alpha_{n,n+1}=1 .
\end{equation}
We say that \(\{P_n\}_{n\ge0}\) \emph{satisfies a recurrence relation with a
fixed number of terms}, or that it has \emph{finite lower bandwidth}, if
there is an integer \(N\) such that
\begin{equation}\label{eq:general-rec}
\alpha_{n,R}=0\quad\text{whenever } n-R\ge N,
\end{equation}
that is, if
\(
zP_n=P_{n+1}+\sum_{s=0}^{N-1}\alpha_{n,n-s}P_{n-s}
\)
for every \(n\), with the convention \(P_r:=0\) for \(r<0\).  The smallest
such \(N\) is called the optimal bandwidth.

Our main result is the following complete classification.

\begin{theorem}\label{thm:main}
Let \(T\) be given by \eqref{eq:T}, let \(k=\ell-m\), and let \(\{P_n\}\) be
the normalized monic eigenpolynomials of \(T\).  Then \(\{P_n\}\) satisfies a
recurrence relation with a fixed number of terms if and only if
\[
j=1
\qquad\text{and}\qquad
k\mid\ell .
\]
Assume these two conditions, and put \(\sigma:=\ell/k\).  Then
\begin{equation}\label{eq:main-rec}
zP_n=P_{n+1}+\sum_{q=1}^{\sigma}\gamma_n^{(q)}P_{n+1-qk},
\end{equation}
where
\begin{equation}\label{eq:main-gamma}
\gamma_n^{(q)}
=
\frac{(-1)^q}{q!\,k^q}
\left(\prod_{a=0}^{q-1}(\ell-ak)\right)
\left(\prod_{c=0}^{q-1}\fall{n-ck}{\ell-1}\right),
\end{equation}
and \(\fall{x}{r}:=x(x-1)\cdots(x-r+1)\).  The associated difference
operator has order \(\ell\); equivalently, the optimal lower bandwidth is
\(N=\ell\), independently of \(m\).
\end{theorem}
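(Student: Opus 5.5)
The approach I would take is an operator conjugation for the sufficiency direction and the explicit coefficient recursion for necessity. Write \(\theta=z\partial_z\) and \(D=z^m\partial_z^\ell\). Then \(D z^n=\fall{n}{\ell}z^{n-k}\), and one checks \([D,\theta]=kD\), so \(D\) is locally nilpotent and lowers degree by \(k\).

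\textbf{Sufficiency (\(j=1\)).} Here \(T=\theta+D\). Since \(\operatorname{ad}_D^2\theta=0\), we get
\[
e^{D/k}\,\theta\,e^{-D/k}=\theta-\tfrac1k[D,\theta]\cdot(-1)=\theta+D=T
\]
(signs to be checked). Hence \(P_n=e^{D/k}z^n\); the sum is finite because \(D\) is nilpotent on polynomials. This agrees with the normalization \eqref{eq:normalization}, since \(P_0=1\) for \(j=1\), and by uniqueness (Lemma~\ref{lem:eigenpolys}) it gives all \(P_n\). Then
\[
zP_n=e^{D/k}\bigl(e^{-D/k}ze^{D/k}\bigr)z^n=e^{D/k}\sum_{q\ge0}\frac{(-1)^q}{q!\,k^q}\,(\operatorname{ad}_D^q z)\,z^n .
\]
The operator \(\operatorname{ad}_D^q z\) is homogeneous of degree \(1-qk\). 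Therefore \((\operatorname{ad}_D^q z)z^n=g_q(n)z^{n+1-qk}\) for a polynomial \(g_q\), and so \(zP_n=\sum_q \frac{(-1)^q}{q!k^q}g_q(n)P_{n+1-qk}\). This gives a complete expansion of \(zP_n\) in the basis \(\{P_R\}\), so by uniqueness \(\alpha_{n,n+1-qk}=\frac{(-1)^q}{q!k^q}g_q(n)\), and all other \(\alpha_{n,R}\) vanish.

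The key computation is the closed form
\[
g_q(n)=\prod_{a=0}^{q-1}(\ell-ak)\prod_{c=0}^{q-1}\fall{n-ck}{\ell-1}.
\]
I would prove it by induction on \(q\) using \(g_{q+1}(n)=g_q(n-k)\fall{n}{\ell}\cdot(\ldots)\). More precisely, \((\operatorname{ad}_D A)z^n=DAz^n-ADz^n\), which gives
\[
g_{q+1}(n)=\fall{n+1-qk}{\ell}\,g_q(n)-g_q(n-k)\fall{n}{\ell}.
\]
Substituting the ansatz reduces the induction to the identity
\[
\fall{n+1-qk}{\ell}\fall{n-(q-1)k}{\ell-1}\cdots-\ldots=(\ell-qk)\cdots,
\]
which follows from \(\fall{x+1}{\ell}-\fall{x+1-\ell}{?}\)-type falling-factorial differences. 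I expect this bookkeeping to be routine but fiddly.

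When \(k\mid\ell\), the factor \(\ell-\sigma k\) kills every \(g_q\) with \(q>\sigma\), which yields \eqref{eq:main-rec} and \eqref{eq:main-gamma}. The deepest term is \(P_{n+1-\sigma k}=P_{n+1-\ell}\). Since \(\gamma_n^{(\sigma)}\) is a nonzero polynomial in \(n\) (all factors \(\ell-ak\) with \(a<\sigma\) are nonzero), it is nonzero for large \(n\). This gives optimal bandwidth \(N=\ell\), independent of \(m\).

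\textbf{Necessity.} Suppose first that \(j=1\) but \(k\nmid\ell\). The same formula is valid, and now no factor \(\ell-ak\) vanishes. For each \(q\), taking \(n\) large makes \(\gamma_n^{(q)}\neq0\), so the bandwidth is unbounded.

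Now suppose \(j\ge2\). Here the conjugation trick fails, and I would instead argue directly with coefficients. Write \(P_n=\sum_i c_{n,i}z^{n-ik}\), where
\[
c_{n,i}=\prod_{t=1}^{i}\frac{\fall{n-(t-1)k}{\ell}}{\fall{n}{j}-\fall{n-tk}{j}}.
\]
The coefficients \(\alpha_{n,n+1-qk}\) are then obtained triangularly by matching the coefficients of \(z^{n+1-qk}\) in \(zP_n-\sum_R\alpha_{n,R}P_R\); note that the \(\alpha_{n,R}\) with \(R\not\equiv n+1\pmod k\) vanish. My plan is to show that for each fixed \(q\), \(\alpha_{n,n+1-qk}\) is a rational function of \(n\) that is not identically zero. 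To do this I would compare its leading asymptotic behaviour as \(n\to\infty\). The denominators \(\fall{n}{j}-\fall{n-tk}{j}\sim jtk\,n^{j-1}\) grow when \(j\ge2\), and this should destroy the cancellation that made \(g_q\) vanish for \(j=1\). A cleaner route may be to compute the leading term of \(\alpha_{n,n+1-qk}\) as a generating-function coefficient and show it is nonzero.

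\textbf{Main obstacle.} This asymptotic nonvanishing for \(j\ge2\), uniformly in \(q\), is the step I expect to be hardest. If the leading-order computation degenerates, I would fall back on a single explicit \(q\), for instance \(q=\sigma+1\) or \(q=\lfloor\ell/k\rfloor+1\), combined with an exact evaluation at a convenient \(n\).
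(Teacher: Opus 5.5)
Your $j=1$ half is correct and takes a different route from the paper. Since $[D,\theta]=kD$, one has $e^{D/k}\theta e^{-D/k}=\theta+D=T$, so $P_n=e^{D/k}z^n$; this is $e^{q(G)}$ with $G,q$ as in Proposition~\ref{prop:conjectures}. Your recursion $g_{q+1}(n)=\fall{n+1-qk}{\ell}g_q(n)-\fall{n}{\ell}g_q(n-k)$ is also right, and the induction closes in one line rather than being fiddly. For $q\ge1$, factor out $\prod_{a=0}^{q-1}(\ell-ak)\prod_{c=1}^{q-1}\fall{n-ck}{\ell-1}$. Using $\fall{n+1-qk}{\ell}=(n+1-qk)\fall{n-qk}{\ell-1}$ and $\fall{n}{\ell}=(n-\ell+1)\fall{n}{\ell-1}$, the remaining bracket is $(\ell-qk)\fall{n}{\ell-1}\fall{n-qk}{\ell-1}$. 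This replaces the paper's coordinate functionals and Chu--Vandermonde summation. It correctly gives the closed form, the case $k\nmid\ell$, and the optimality of $N=\ell$.

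The gap is the case $j\ge2$ (Theorem~\ref{thm:jge2}), which is half the classification and which you only sketch. The sketch also fails at its first step. Fix $q$ and let $n\to\infty$. Then $b_{n,n-tk}\sim n^{t(\ell-j+1)}/((jk)^t t!)$. The coefficient of $P_{n+1-qk}$ in the $P$-expansion of $z^{n+1-tk}$ is $\sim(-1)^{q-t}n^{(q-t)(\ell-j+1)}/((jk)^{q-t}(q-t)!)$. Hence the leading term of $\alpha_{n,n+1-qk}$ is a multiple of $\sum_{t=0}^{q}(-1)^{q-t}\binom{q}{t}=0$, for every $j$. The growth of the denominators does not destroy the cancellation. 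You would have to expand to higher order: judging from $j=1$, where the degree drops from $q\ell$ to $q(\ell-1)$, to relative order $n^{-q}$, and for unboundedly many $q$. Your fallback does not repair this. Nonvanishing of $\alpha_{n,n+1-qk}$ for one explicit $q$ only shows that the bandwidth exceeds $qk-1$. To exclude every finite $N$ you need infinitely many nonzero diagonals; no uniformity in $q$ is required, but one $q$ is not enough.

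The paper avoids diagonal asymptotics by fixing the target index instead of the drop.
\begin{itemize}
\item On the $T$-invariant space $\mathcal V_{\ell-1}$, $T$ is bidiagonal in $e_i=z^{\ell-1+ik}$. So the coordinate functionals $\psi_R$ of the basis $\{P_{\ell-1+ik}\}$ are explicit products in $M_i=\fall{\ell-1+ik}{\ell}$ and $\Lambda_i=\fall{\ell-1+ik}{j}$ (Lemma~\ref{lem:coordinate}).
\item A finite bandwidth $N$ forces $\chi=\psi_0\circ z^k$ to vanish on $P_{\ell-1+ik}$ for $i\ge N$. Hence $\chi=\sum_{R<N}c_R\psi_R$.
\item Evaluate at $e_i$ and divide by $\psi_0(e_i)$. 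The left side is $-M_{i+1}/(\Lambda_{i+1}-\Lambda_0)\sim-(ki)^{\ell-j}\to-\infty$.
\item Each ratio $\psi_R(e_i)/\psi_0(e_i)$ is a constant times $\prod_{s=R+1}^{i}(\Lambda_s-\Lambda_0)/(\Lambda_s-\Lambda_R)$. This product converges as $i\to\infty$ because $\sum_s s^{-j}<\infty$ exactly when $j\ge2$.
\end{itemize}
That single growth comparison is the ingredient your proposal is missing.
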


The two cases \(m=0\) and \(m=\ell-1\) correspond to the two extreme divisors
\(k=\ell\) and \(k=1\); further admissible values of \(m\) occur precisely
when \(\ell\) is composite (Corollary~\ref{cor:prime}).  The smallest new
case is \((\ell,m,j)=(4,2,1)\), that is
\(T=z\partial_z+z^2\partial_z^4\); see Example~\ref{ex:counterexample}.

We also wish to point out a trap.  For \(k\ge2\) the sequence \(\{P_n\}\) is
\((k-1)\)-symmetric in the sense of Douak and Maroni, i.e.
\(P_n(wz)=w^n P_n(z)\) for \(w=e^{2\pi i/k}\).  By
\cite[Th\'eor\`eme 5.1]{Douak}, a \((k-1)\)-symmetric and
\((k-1)\)-orthogonal sequence obeys a recurrence of the special shape
\(P_{n+k}=zP_{n+k-1}-\gamma_n P_n\), with a single lower term.  It is tempting
to conclude that a symmetric sequence with \emph{some} fixed-term recurrence
must obey a recurrence of that shape.  This is not so: the cited theorem
presupposes \((k-1)\)-orthogonality, which is precisely the statement that
the recurrence has \(k+1\) terms.  Symmetry alone yields only the congruence
of Proposition~\ref{prop:congruence}, and Example~\ref{ex:counterexample}
exhibits a \(1\)-symmetric sequence with two lower diagonals.  Accordingly,
the proofs below control all permitted lower diagonals simultaneously.

The paper is organized as follows.  Section~\ref{sec:eigen} collects the
eigenpolynomials, their support, and the congruence constraint.
Section~\ref{sec:bidiag} sets up a bidiagonal model on the \(T\)-invariant
subspaces spanned by a single residue class of monomials; this is where all
the content lies, and it is uniform in \(m\).  Section~\ref{sec:necessity}
proves that \(j\ge2\) is impossible, and Section~\ref{sec:jone} treats
\(j=1\) by an explicit evaluation, completing the proof of
Theorem~\ref{thm:main}.  We conclude by identifying the admissible operators
with the Type~(2) family in Conjecture~1.10 of \cite{HoShTa}.

\section{Eigenpolynomials, support, and the congruence constraint}
\label{sec:eigen}

Throughout, \(\fall{x}{r}=x(x-1)\cdots(x-r+1)\) and \(\fall{x}{0}=1\).  On
monomials,
\begin{equation}\label{eq:Tmon}
Tz^n=\fall{n}{j}z^n+\fall{n}{\ell}z^{n-k},
\end{equation}
so that the eigenvalue attached to \(P_n\) is
\begin{equation}\label{eq:eigenvalue}
\lambda_n=\fall{n}{j}.
\end{equation}
Note that \(\lambda_x=\fall{x}{j}\) vanishes for \(0\le x<j\) and is strictly
increasing for \(x\ge j\); hence
\begin{equation}\label{eq:distinct}
\lambda_n\ne\lambda_s
\qquad\text{whenever } n\ge j \text{ and } 0\le s<n .
\end{equation}

\begin{lemma}\label{lem:eigenpolys}
Let \(n\ge j\) and write \(P_n(z)=\sum_{s=0}^n b_{n,s}z^s\) with
\(b_{n,n}=1\).  Then \(P_n\) exists and is unique, and its coefficients are
determined by
\begin{equation}\label{eq:coefficient-rec}
\bigl(\lambda_n-\lambda_s\bigr)b_{n,s}
=
\fall{s+k}{\ell}\,b_{n,s+k},
\qquad 0\le s\le n,
\end{equation}
where \(b_{n,r}:=0\) for \(r>n\).  Moreover:
\begin{enumerate}
\item \(b_{n,s}=0\) unless \(s\equiv n \pmod k\), and \(b_{n,s}=0\) for
\(s<m\);
\item \(P_n=z^n\) for \(j\le n<\ell\);
\item for \(n\ge\ell\),
\begin{equation}\label{eq:eigenpoly-support}
P_n(z)=\sum_{t=0}^{r_n}b_{n,n-tk}\,z^{n-tk},
\qquad r_n:=\left\lfloor\frac{n-m}{k}\right\rfloor,
\end{equation}
with
\begin{equation}\label{eq:eigenpoly-coeff}
b_{n,n-tk}
=
\prod_{i=1}^{t}
\frac{\fall{n-(i-1)k}{\ell}}{\fall{n}{j}-\fall{n-ik}{j}},
\qquad 0\le t\le r_n .
\end{equation}
\end{enumerate}
If \(j=1\), then the right-hand side of \eqref{eq:eigenpoly-coeff} equals
\(0\) for every \(t>r_n\) as well, so that \eqref{eq:eigenpoly-coeff} is
valid for all \(t\ge0\).
\end{lemma}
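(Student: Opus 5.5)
Apply \(T\) to \(P_n=\sum_s b_{n,s}z^s\) using \eqref{eq:Tmon}:
\[
TP_n=\sum_s\bigl(\lambda_s b_{n,s}+\fall{s+k}{\ell}\,b_{n,s+k}\bigr)z^s .
\]
Comparing the coefficient of \(z^s\) in \(TP_n=\lambda P_n\) at \(s=n\) forces \(\lambda=\lambda_n\), since \(b_{n,n}=1\) and \(b_{n,n+k}=0\). The remaining coefficients give exactly \eqref{eq:coefficient-rec}. By \eqref{eq:distinct}, \(\lambda_n-\lambda_s\neq0\) for \(s<n\) when \(n\ge j\). Hence \eqref{eq:coefficient-rec} determines \(b_{n,s}\) uniquely by downward induction on \(s\) from \(b_{n,n}=1\). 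This proves existence and uniqueness at once, because the triangular system is equivalent to the eigen-equation.

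For (1), run the same downward induction. If \(s\not\equiv n\pmod k\), then \(s+k\not\equiv n\). Starting from \(b_{n,r}=0\) for \(r>n\), the recursion gives \(b_{n,s}=0\) for every such \(s\). For the vanishing below \(m\), take \(s<m\). Then \(s+k<\ell\), so \(\fall{s+k}{\ell}=0\), and the right-hand side of \eqref{eq:coefficient-rec} vanishes; this forces \(b_{n,s}=0\). For (2), suppose \(j\le n<\ell\). For every \(s<n\) with \(s\equiv n\pmod k\) we have \(s+k\le n<\ell\), so the factor \(\fall{s+k}{\ell}\) kills the right-hand side. All lower coefficients therefore vanish and \(P_n=z^n\).

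For (3), the nonzero indices are \(s=n-tk\) with \(n-tk\ge m\), that is, \(0\le t\le r_n\). Iterate \(b_{n,n-ik}=\fall{n-(i-1)k}{\ell}\,b_{n,n-(i-1)k}/(\lambda_n-\lambda_{n-ik})\) from \(i=1\) to \(t\); this telescopes to \eqref{eq:eigenpoly-coeff}. The denominators are nonzero because \(n-ik<n\) and \(n\ge\ell>j\).

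For the final claim with \(j=1\), note that \(\lambda_x=x\). The denominators \(n-(n-ik)=ik\) never vanish, so the product makes sense for every \(t\). If \(t>r_n\), take \(i=r_n+1\le t\). Then \(n-(i-1)k=n-r_nk\) lies in \([m,m+k)=[m,\ell)\). Hence \(\fall{n-r_nk}{\ell}=0\), the factor with index \(i\) vanishes, and the whole product is \(0\).

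The main obstacle is bookkeeping rather than ideas: one must make sure the denominators never vanish, and handle carefully the negative or small indices \(n-tk\) where falling factorials vanish. The cleanest way is to observe that \(\fall{x}{\ell}=0\) exactly for integers \(0\le x<\ell\).
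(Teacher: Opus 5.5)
Your proposal is correct and follows the paper's proof essentially step for step. The shared steps are: coefficient comparison giving \eqref{eq:coefficient-rec}, downward induction using $\lambda_n\ne\lambda_s$ for $s<n$, telescoping for (3), and, when $j=1$, the vanishing factor $\fall{n-r_nk}{\ell}$ together with nonzero denominators $k^t t!$. Your direct proof of (2) and your remark that the eigenvalue is forced to be $\lambda_n$ are only cosmetic variations.

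One caveat applies equally to the paper's own argument. The step ``$s<m$ forces $b_{n,s}=0$'' divides by $\lambda_n-\lambda_s$, so it needs $s<n$. When $j\le n<m$ it fails at $s=n$, and in fact $P_n=z^n$ by (2), so the clause is false there. That clause of (1) should therefore be read for $n\ge m$, which is all that (3) and the later sections use.
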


\begin{proof}
Comparing the coefficients of \(z^s\) on both sides of
\(TP_n=\lambda_nP_n\) and using \eqref{eq:Tmon} gives
\eqref{eq:coefficient-rec}.  By \eqref{eq:distinct}, for \(n\ge j\) the
factor \(\lambda_n-\lambda_s\) is nonzero for every \(s<n\); hence
\eqref{eq:coefficient-rec} determines \(b_{n,s}\) from \(b_{n,s+k}\) for all
\(s<n\), which proves existence and uniqueness.

If \(s+k>n\), the right-hand side of \eqref{eq:coefficient-rec} vanishes, so
\(b_{n,s}=0\); descending induction on \(s\) in steps of \(k\) shows that
\(b_{n,s}=0\) unless \(s\equiv n\pmod k\).  If \(s<m\), then
\(s+k<\ell\) and therefore \(\fall{s+k}{\ell}=0\), whence \(b_{n,s}=0\);
this proves (1).  If \(j\le n<\ell\), then \(\fall{n}{\ell}=0\), so
\eqref{eq:coefficient-rec} with \(s=n-k\) gives \(b_{n,n-k}=0\), and
descending induction gives (2).

For \(n\ge\ell\), iterating \eqref{eq:coefficient-rec} downwards from
\(s=n\) yields \eqref{eq:eigenpoly-coeff}; the range of \(t\) is bounded by
(1), because \(n-tk\ge m\) is equivalent to \(t\le r_n\).  Finally,
\(n-r_n k\) lies in \(\{m,\dots,m+k-1\}=\{m,\dots,\ell-1\}\), so
\(\fall{n-r_n k}{\ell}=0\); thus for \(t>r_n\) the numerator in
\eqref{eq:eigenpoly-coeff} contains the factor \(\fall{n-r_n k}{\ell}=0\).
When \(j=1\), the denominator equals
\(\prod_{i=1}^{t}\bigl(n-(n-ik)\bigr)=k^t\,t!\ne0\), so the whole expression
vanishes, as claimed.  (For \(j\ge2\) we simply keep the convention
\(b_{n,s}=0\) outside the range described in (1) and (3).)
\end{proof}

For \(0\le c\) we write
\begin{equation}\label{eq:Vc}
\mathcal V_c:=\spn\{z^{c+ik}:i\ge0\}\subset\C[z].
\end{equation}
Then
\begin{equation}\label{eq:decomposition}
\C[z]=\bigoplus_{r=0}^{k-1}\mathcal V_r ,
\end{equation}
and, by Lemma~\ref{lem:eigenpolys} together with the normalization
\eqref{eq:normalization},
\begin{equation}\label{eq:P-in-V}
P_n\in\mathcal V_{r}\qquad\text{for } r\equiv n \pmod k,\ 0\le r<k .
\end{equation}
Equivalently, \(P_n(wz)=w^n P_n(z)\) with \(w=e^{2\pi i/k}\), i.e.\ the
sequence \(\{P_n\}\) is \((k-1)\)-symmetric.

\begin{proposition}\label{prop:congruence}
For all \(n\) and \(R\),
\[
\alpha_{n,R}=0
\qquad\text{unless}\qquad
R\equiv n+1 \pmod k .
\]
Consequently
\begin{equation}\label{eq:symmetric-rec}
zP_n=P_{n+1}+\sum_{q\ge1}\gamma_n^{(q)}P_{n+1-qk},
\qquad \gamma_n^{(q)}:=\alpha_{n,n+1-qk},
\end{equation}
where for each fixed \(n\) the sum is finite.  The sequence \(\{P_n\}\) has
finite lower bandwidth if and only if the number of nonzero summands in
\eqref{eq:symmetric-rec} is bounded uniformly in \(n\).
\end{proposition}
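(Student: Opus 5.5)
The plan is to use the direct-sum decomposition \eqref{eq:decomposition} together with \eqref{eq:P-in-V}. Fix \(n\) and let \(r\equiv n\pmod k\) with \(0\le r<k\). By \eqref{eq:P-in-V}, \(P_n\in\mathcal V_r\). Multiplication by \(z\) sends \(z^{r+ik}\) to \(z^{r+1+ik}\), so \(zP_n\in\mathcal V_{r'}\), where \(r'\equiv n+1\pmod k\) and \(0\le r'<k\). (When \(r=k-1\) we have \(r'=0\), and \(z^{k+ik}=z^{0+(i+1)k}\) still lies in \(\mathcal V_0\).)

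Next, group the expansion \eqref{eq:alpha-def} by residue class:
\[
zP_n=\sum_{\rho=0}^{k-1}\Bigl(\sum_{R\le n+1,\ R\equiv\rho\ (\mathrm{mod}\ k)}\alpha_{n,R}P_R\Bigr).
\]
By \eqref{eq:P-in-V}, the inner sum for each \(\rho\) lies in \(\mathcal V_\rho\). The decomposition \eqref{eq:decomposition} is direct and the left side lies in \(\mathcal V_{r'}\), so every inner sum with \(\rho\ne r'\) must vanish.

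The polynomials \(P_R\) have distinct degrees \(R\), so they are linearly independent. Hence every coefficient \(\alpha_{n,R}\) with \(R\not\equiv n+1\pmod k\) is zero, which is the first claim.

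The expansion \eqref{eq:symmetric-rec} now follows by writing \(R=n+1-qk\) for \(q\ge0\). The term \(q=0\) carries \(\alpha_{n,n+1}=1\). For fixed \(n\) the sum is finite because \(R\ge0\) forces \(q\le (n+1)/k\).

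For the final equivalence, first suppose the bandwidth is finite, so \(\alpha_{n,R}=0\) whenever \(n-R\ge N\). The nonzero \(\gamma_n^{(q)}\) must then satisfy \(qk-1<N\). This bounds \(q\), and hence the number of nonzero summands, uniformly in \(n\).

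The converse needs care, since counting nonzero summands is not the same as bounding their position. Bounded count alone does not obviously force bounded \(q\). I would read the intended statement as "nonzero only for \(q\le Q\) with \(Q\) independent of \(n\)," which gives finite bandwidth with \(N=Qk\). This interpretive point is the only delicate step. The rest is routine linear algebra on the graded direct sum.
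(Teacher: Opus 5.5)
Your argument is correct and is essentially the paper's: both show $zP_n\in\mathcal V_{(n+1)\bmod k}$ and conclude from uniqueness of components in the direct sum \eqref{eq:decomposition}, the only cosmetic difference being that you finish with linear independence of the $P_R$ where the paper uses that $\{P_R : R\equiv\rho \pmod k\}$ is a basis of $\mathcal V_\rho$. Your caveat about the last assertion is also justified: the paper calls it immediate from \eqref{eq:general-rec}, but that holds only under your reading (nonzero $\gamma_n^{(q)}$ confined to $q\le Q$ with $Q$ independent of $n$), since a bounded count alone does not bound the index drop $qk-1$; that converse direction is never used later, where finite or infinite bandwidth is always checked directly against \eqref{eq:general-rec}.
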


\begin{proof}
Fix \(0\le r<k\).  By \eqref{eq:P-in-V} the polynomials \(P_n\) with
\(n\equiv r \pmod k\) lie in \(\mathcal V_r\), have pairwise distinct
degrees, and one of each degree in \(r+k\N\) occurs; hence they form a basis
of \(\mathcal V_r\).  Now \(P_n\in\mathcal V_{n\bmod k}\) implies
\(zP_n\in\mathcal V_{(n+1)\bmod k}\).  Expanding \(zP_n\) in the basis
\(\{P_R\}_{R\ge0}\) of \(\C[z]\) and comparing with its expansion in the
basis of \(\mathcal V_{(n+1)\bmod k}\), the uniqueness of coordinates with
respect to the direct sum \eqref{eq:decomposition} forces
\(\alpha_{n,R}=0\) for \(R\not\equiv n+1 \pmod k\).  The last assertion is
immediate from \eqref{eq:general-rec}.
\end{proof}

We stress that Proposition~\ref{prop:congruence} is unconditional: no
recurrence is assumed.  It is also all that symmetry gives.

\begin{example}\label{ex:counterexample}
Let \((\ell,m,j)=(4,2,1)\), that is
\[
T=z\partial_z+z^2\partial_z^4,
\qquad k=2 .
\]
Here \(\lambda_n=n\) and \(P_n=z^n\) for \(n\le3\), while
\begin{align*}
P_4&=z^4+12z^2, & P_5&=z^5+60z^3,\\
P_6&=z^6+180z^4+1080z^2,
& P_7&=z^7+420z^5+12600z^3.
\end{align*}
and so on.  For every \(n\ge0\) one has
\begin{equation}\label{eq:counterexample}
zP_n
=
P_{n+1}
-2n(n-1)(n-2)\,P_{n-1}
+n(n-1)(n-2)^2(n-3)(n-4)\,P_{n-3} .
\end{equation}
Thus the sequence is \(1\)-symmetric and has finite lower bandwidth
\(N=4=\ell\), but \emph{two} lower diagonals occur.  Since the coefficient of
\(P_{n-3}\) is eventually nonzero, the sequence is \(3\)-orthogonal in the
standard terminology, not \(1\)-orthogonal.  Formula
\eqref{eq:counterexample} is the case \(k=\sigma=2\) of
Theorem~\ref{thm:main}.
\end{example}

\begin{remark}\label{rem:normalization}
The normalization \eqref{eq:normalization} is harmless.  If \(j=1\) there is
no freedom at all, since \(P_0=1\) is forced; and \(j=1\) is the only case in
which a fixed-term recurrence exists.  If \(j\ge2\), the proof of
Theorem~\ref{thm:jge2} uses only the polynomials \(P_n\) with
\(n\ge\ell-1\ge j\), which are unique; hence the negative conclusion holds
for every choice of \(P_0,\dots,P_{j-1}\).
\end{remark}

\section{A bidiagonal model on a residue class}\label{sec:bidiag}

Fix an integer \(c\) with \(0\le c\le\ell-1\) and consider the subspace
\(\mathcal V_c\) of \eqref{eq:Vc}.  Put
\[
e^{(c)}_i:=z^{c+ik},
\qquad
\Lambda^{(c)}_i:=\lambda_{c+ik}=\fall{c+ik}{j},
\qquad
M^{(c)}_i:=\fall{c+ik}{\ell}.
\]
We drop the superscript \((c)\) when no confusion is possible.

\begin{lemma}\label{lem:basis}
Let \(0\le c\le\ell-1\).  Then:
\begin{enumerate}
\item \(M_0=0\), and \eqref{eq:Tmon} reads
\begin{equation}\label{eq:bidiagonal}
Te_0=\Lambda_0e_0,
\qquad
Te_i=\Lambda_ie_i+M_ie_{i-1}\quad(i\ge1);
\end{equation}
in particular \(T\mathcal V_c\subseteq\mathcal V_c\).  Also
\(z^k\mathcal V_c\subseteq\mathcal V_c\), with \(z^k e_i=e_{i+1}\).
\item \(P_{c+ik}\in e_i+\spn\{e_0,\dots,e_{i-1}\}\) for every \(i\ge0\);
consequently \(\{P_{c+ik}\}_{i\ge0}\) is a basis of \(\mathcal V_c\).
\item If moreover \(c\ge m\), then \(M_i>0\) for every \(i\ge1\).
\end{enumerate}
\end{lemma}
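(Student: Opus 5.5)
The three parts are essentially bookkeeping from \eqref{eq:Tmon} and Lemma~\ref{lem:eigenpolys}, and I would verify them in order.

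For (1), substitute \(n=c+ik\) into \eqref{eq:Tmon}: \(Te_i=\fall{c+ik}{j}e_i+\fall{c+ik}{\ell}z^{c+(i-1)k}=\Lambda_ie_i+M_ie_{i-1}\) for \(i\ge1\). For \(i=0\) we need \(M_0=\fall{c}{\ell}=0\), which holds since \(0\le c\le\ell-1\) makes \(\fall{c}{\ell}\) contain the factor \(c-c=0\). This gives \(Te_0=\Lambda_0e_0\), since the formally negative-degree term carries coefficient zero. Invariance \(T\mathcal V_c\subseteq\mathcal V_c\) follows from the bidiagonal form. The relation \(z^ke_i=z^{c+(i+1)k}=e_{i+1}\) is immediate.

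For (2), write \(n=c+ik\). If \(n<j\), the normalization \eqref{eq:normalization} gives \(P_n=z^n=e_i\). If \(n\ge j\), Lemma~\ref{lem:eigenpolys}(1) says \(b_{n,s}=0\) unless \(s\equiv n\equiv c\pmod k\). The monic \(P_n\) is therefore a combination of \(z^s\) with \(s\le n\) and \(s\equiv c\pmod k\), \(s\ge0\). Such exponents are exactly \(c+i'k\) with \(0\le i'\le i\), because \(c\) is the least nonnegative member of its residue class only when \(c<k\). This is the one point needing care. When \(c\ge k\), exponents \(s=c-k,\dots\) below \(c\) could a priori occur, so I would additionally invoke \(b_{n,s}=0\) for \(s<m\). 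Since \(c-k\le\ell-1-k=m-1<m\), every exponent \(s\equiv c\pmod k\) with \(s<c\) is at most \(c-k<m\) and hence vanishes. Thus \(P_{c+ik}\in e_i+\spn\{e_0,\dots,e_{i-1}\}\).

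The unitriangularity then shows that \(\{P_{c+ik}\}_{i\ge0}\) spans the same filtration pieces as \(\{e_i\}\). Hence it is a basis of \(\mathcal V_c\).

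For (3), take \(c\ge m\) and \(i\ge1\). Then \(c+ik\ge m+k=\ell\), so \(\fall{c+ik}{\ell}=\prod_{r=0}^{\ell-1}(c+ik-r)\) has all factors \(\ge c+ik-\ell+1\ge1\). Hence \(M_i>0\).

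The only subtle step is the exclusion of exponents below \(c\) in (2) when \(c\ge k\), which is handled by the support condition \(s\ge m\).
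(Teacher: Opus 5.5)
Your proof is correct and is essentially the paper's argument. Parts (1) and (3) are read off from \eqref{eq:Tmon} in the same way, and in (2) you use the same support information from Lemma~\ref{lem:eigenpolys}(1) (congruence modulo $k$ and vanishing of exponents below $m$); you merely phrase it as $c-k\le m-1$ rather than locating the lowest exponent in $\{m,\dots,\ell-1\}$, and you split cases at $n<j$ instead of $n<\ell$.

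One small point: the vanishing $b_{n,s}=0$ for $s<m$ comes from \eqref{eq:coefficient-rec} with $\lambda_n\ne\lambda_s$, so it holds only for $s<n$ (it fails at $s=n$ when $j\le n<m$). You only invoke it at exponents $s<c\le n$, so your argument is unaffected.
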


\begin{proof}
(1) \(M_0=\fall{c}{\ell}=0\) because \(0\le c\le\ell-1\); the rest is
\eqref{eq:Tmon}.

(2) If \(c+ik<\ell\) then \(P_{c+ik}=z^{c+ik}=e_i\) by
Lemma~\ref{lem:eigenpolys}(2) and \eqref{eq:normalization}.  If
\(c+ik\ge\ell\), then by Lemma~\ref{lem:eigenpolys}(1),(3) all exponents
occurring in \(P_{c+ik}\) are congruent to \(c\) modulo \(k\) and are at
least \(m\); the smallest one lies in \(\{m,\dots,m+k-1\}\).  If \(c\ge m\),
that smallest exponent is \(c\) itself, because \(c\) is the unique element
of \(\{m,\dots,m+k-1\}=\{m,\dots,\ell-1\}\) congruent to \(c\).  If
\(c<m\), the smallest exponent is \(c+ak\) for the unique positive integer
\(a\) which places it in this interval.  In either case
\(P_{c+ik}\in\mathcal V_c\), and the leading term is \(e_i\).  Triangularity gives the basis statement.

(3) \(M_i=\fall{c+ik}{\ell}\) is nonzero as soon as \(c+ik\ge\ell\); and if
\(c\ge m\) and \(i\ge1\) then \(c+ik\ge m+k=\ell\).
\end{proof}

Part (3) may fail for \(c<m\): for instance \(M_1=\fall{c+k}{\ell}=0\)
whenever \(c<m\).  This is harmless below, since
Lemma~\ref{lem:coordinate} does not require \(M_i\ne0\).

Let \(\psi^{(c)}_R\in\mathcal V_c^{*}\) denote the coordinate functionals of
this basis,
\[
\psi^{(c)}_R\bigl(P_{c+ik}\bigr)=\delta_{Ri}.
\]

\begin{lemma}\label{lem:coordinate}
Let \(0\le c\le\ell-1\) and assume that the numbers \(\Lambda_i\)
\((i\ge0)\) are pairwise distinct; this holds automatically if \(c\ge j\),
and also if \(j=1\).  Then \(\psi^{(c)}_R\circ T=\Lambda_R\psi^{(c)}_R\) and
\begin{equation}\label{eq:coordinate}
\psi^{(c)}_R(e_i)=
\begin{cases}
0,&i<R,\\[2pt]
1,&i=R,\\[2pt]
\displaystyle(-1)^{i-R}\prod_{q=R+1}^{i}
\dfrac{M_q}{\Lambda_q-\Lambda_R},&i>R.
\end{cases}
\end{equation}
\end{lemma}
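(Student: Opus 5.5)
The plan is to work entirely inside \(\mathcal V_c\) with the bidiagonal action \eqref{eq:bidiagonal}, and to obtain both assertions from the fact that \(\psi^{(c)}_R\) is a left eigenvector of the upper bidiagonal matrix of \(T\) in the basis \(\{e_i\}\).

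First I would prove the intertwining relation \(\psi_R\circ T=\Lambda_R\psi_R\). By Lemma~\ref{lem:basis}(2), \(\{P_{c+ik}\}_{i\ge0}\) is a basis of \(\mathcal V_c\). Each \(P_{c+ik}\) is an eigenvector of \(T\) with eigenvalue \(\lambda_{c+ik}=\Lambda_i\); for indices below \(j\) this follows from the normalization \eqref{eq:normalization}, since then \(P=z^{c+ik}\) and \(T z^{c+ik}=0=\Lambda_i z^{c+ik}\) because \(\fall{c+ik}{\ell}=0\) there. Therefore
\[
\psi_R(TP_{c+ik})=\Lambda_i\delta_{Ri}=\Lambda_R\psi_R(P_{c+ik}).
\]
By linearity this gives \(\psi_R\circ T=\Lambda_R\psi_R\) on all of \(\mathcal V_c\).

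Next I would apply this relation to \(e_i\) and use \eqref{eq:bidiagonal}. This yields \(\Lambda_i\psi_R(e_i)+M_i\psi_R(e_{i-1})=\Lambda_R\psi_R(e_i)\), which rearranges to
\[
(\Lambda_i-\Lambda_R)\,\psi_R(e_i)=-M_i\,\psi_R(e_{i-1})\qquad(i\ge1).
\]
The initial values come from triangularity. Since \(P_{c+ik}\in e_i+\spn\{e_0,\dots,e_{i-1}\}\), the change of basis is unitriangular, so its inverse is as well: \(e_i\in P_{c+ik}+\spn\{P_c,\dots,P_{c+(i-1)k}\}\). Hence \(\psi_R(e_i)=0\) for \(i<R\) and \(\psi_R(e_R)=1\).

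For \(i>R\), the distinctness hypothesis gives \(\Lambda_i\ne\Lambda_R\), so we may divide by \(\Lambda_i-\Lambda_R\). Upward induction from \(i=R\) then produces the product in \eqref{eq:coordinate}. No division by \(M_q\) ever occurs, which is why \(M_q=0\) is allowed.

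It remains to check the claim that distinctness is automatic in the two stated cases.
\begin{itemize}
\item If \(j=1\), then \(\Lambda_i=c+ik\), which is strictly increasing in \(i\).
\item If \(c\ge j\), then every \(c+ik\ge j\), and \eqref{eq:distinct} applies, since \(\lambda_x\) is strictly increasing for \(x\ge j\).
\end{itemize}

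The only delicate point is the eigenvector claim for the low-degree \(P_n\) in the first step. I would dispose of it as indicated there, using \eqref{eq:normalization} together with Lemma~\ref{lem:eigenpolys}(2).
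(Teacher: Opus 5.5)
Your proposal is correct and follows essentially the same route as the paper's proof. Both derive \(\psi_R\circ T=\Lambda_R\psi_R\) from the diagonality of \(T\) in the basis \(\{P_{c+ik}\}\), apply it to \(e_i\) via the bidiagonal action to get a first-order recurrence, and take the initial values from unitriangularity; your explicit treatment of the normalized \(P_n\) with \(n<j\) and your direct check of distinctness are just slightly more detailed versions of the same steps.
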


\begin{proof}
First, the \(\Lambda_i\) are pairwise distinct under either hypothesis: by
\eqref{eq:distinct} the map \(x\mapsto\fall{x}{j}\) is injective on
\(\{x\ge j\}\) and vanishes on \(\{0\le x<j\}\), so it is injective on the
increasing sequence \(c+ik\) as soon as at most one term of that sequence is
smaller than \(j\); this is the case when \(c\ge j\) (no term) and when
\(j=1\) (at most the term \(c=0\)).

By Lemma~\ref{lem:basis}(2), \(T\) restricted to \(\mathcal V_c\) is diagonal
in the basis \(\{P_{c+ik}\}\), with \(TP_{c+ik}=\Lambda_iP_{c+ik}\); hence
\(\psi^{(c)}_R\circ T=\Lambda_R\psi^{(c)}_R\).  Applying this identity to
\(e_i\) and using \eqref{eq:bidiagonal} gives
\begin{align*}
\Lambda_i\psi^{(c)}_R(e_i)+M_i\psi^{(c)}_R(e_{i-1})
&=\Lambda_R\psi^{(c)}_R(e_i),\\
\psi^{(c)}_R(e_i)
&=-\frac{M_i}{\Lambda_i-\Lambda_R}\,\psi^{(c)}_R(e_{i-1}).
\end{align*}
for \(i\ne R\).  Together with \(\psi^{(c)}_R(e_i)=0\) for \(i<R\) and
\(\psi^{(c)}_R(e_R)=1\), which follow from the triangularity in
Lemma~\ref{lem:basis}(2), this gives \eqref{eq:coordinate}.
\end{proof}

\section{The necessity of \texorpdfstring{\(j=1\)}{j=1}}\label{sec:necessity}

\begin{theorem}\label{thm:jge2}
Let \(2\le j<\ell\) and \(0\le m<\ell\).  Then the monic eigenpolynomials of
\(T\) do not satisfy any recurrence relation with a fixed number of terms.
\end{theorem}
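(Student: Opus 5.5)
The strategy is to exhibit, for infinitely many \(n\), a nonzero coefficient \(\alpha_{n,R}\) with \(R\) fixed, so that \(n-R\) is unbounded. By Proposition~\ref{prop:congruence} and Remark~\ref{rem:normalization}, I will work in one residue class \(\mathcal V_c\) and use only the unique eigenpolynomials. I take \(c:=\ell-1\). Then \(c\ge j\), since \(j<\ell\), and \(c\ge m\), since \(m\le\ell-1\). Hence Lemma~\ref{lem:basis}(3) gives \(M_q>0\) for \(q\ge1\), and Lemma~\ref{lem:coordinate} applies because \(c\ge j\). The index \(R=0\) in this class corresponds to \(P_{\ell-1}=z^{\ell-1}\). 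For \(n\) with \(n+1=c+ik\) and \(i\ge1\), the coefficient of \(P_{\ell-1}\) in \(zP_n\) is
\[
\alpha_{n,\ell-1}=\psi^{(c)}_0(zP_n)
=\sum_{t}b_{n,n-tk}\,\psi^{(c)}_0(e_{i-t}),
\]
with \(b_{n,n-tk}\) given by \eqref{eq:eigenpoly-coeff} and \(\psi^{(c)}_0(e_{i-t})\) by \eqref{eq:coordinate}. It suffices to show that this number is nonzero for infinitely many \(i\).

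To control the alternating sum, I will first use the intertwining \(\psi^{(c)}_0\circ T=\Lambda_0\psi^{(c)}_0\) together with the commutator
\[
[T,z]=j\,z^{j}\partial_z^{j-1}+\ell\,z^{m}\partial_z^{\ell-1}.
\]
Applying \(\psi^{(c)}_0\) to \(TzP_n=zTP_n+[T,z]P_n\) gives
\[
(\Lambda_0-\lambda_n)\,\alpha_{n,\ell-1}=\psi^{(c)}_0\bigl([T,z]P_n\bigr).
\]
Here \(\Lambda_0=\fall{\ell-1}{j}\ne\lambda_n\) by \eqref{eq:distinct}. The right-hand side is again a finite combination of the explicit numbers \(\psi^{(c)}_0(e_s)\), but with modified weights. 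The key structural observation I plan to exploit is the following. When \(j=1\), the weights \(b\) and the reciprocals of \(\Lambda_q-\Lambda_0\) come from the linear eigenvalue map \(x\mapsto x\), so the sum telescopes into a divided difference of a polynomial of too low degree, which vanishes. I will rewrite the sum for general \(j\) as a divided difference over the nodes \(\Lambda_0,\Lambda_1,\dots,\Lambda_i\), which are the values \(\fall{c+qk}{j}\), and show that when \(j\ge2\) the nonlinearity of \(x\mapsto\fall{x}{j}\) leaves a surviving leading term.

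The main obstacle is proving nonvanishing of this sum. The number of terms grows with \(n\), so it is not a fixed rational function of \(n\). My first attempt is sign and dominance: all \(M_q\), \(\Lambda_q-\Lambda_0\) and \(b_{n,n-tk}\) are positive in this class. I will compare the sizes of the terms and expect one extreme term to dominate for large \(i\). The growth rates are \(b_{n,n-tk}\approx n^{(\ell-j)t}\) up to factorials, and \(\prod_q M_q/(\Lambda_q-\Lambda_0)\) grows like a power of \(q\) with exponent \(\ell-j\ge1\). Failing a clean dominance estimate, I will fall back on the divided-difference formulation and evaluate the top coefficient exactly. That evaluation should produce a factor such as \((j-1)\) times a positive product, which is nonzero exactly when \(j\ge2\), consistent with Theorem~\ref{thm:main} for \(j=1\).
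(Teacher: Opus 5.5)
\textbf{Genuine gap: the nonvanishing step is missing.} Your reduction is sound. For $n=\ell-2+ik$ with $i\ge1$ one has $zP_n\in\mathcal V_{\ell-1}$, so
\[
\alpha_{n,\ell-1}=\psi^{(\ell-1)}_0(zP_n)=\sum_t b_{n,n-tk}\,\psi^{(\ell-1)}_0(e_{i-t}),
\]
and showing this is nonzero for infinitely many $i$ would prove the theorem (the claim is in fact true). But the proposal stops exactly where the proof has to begin: no argument for the nonvanishing is given.

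Your intermediate steps do not supply it:
\begin{itemize}
\item The commutator identity only rewrites the same coordinate as another alternating sum of the same type.
\item The dominance plan cannot work as stated. Put $x=n-(t-1)k$. The ratio of the $t$-th term to the $(t-1)$-st has sign $-1$ and absolute value
\[
\frac{x+1-\ell}{x+1}\cdot\frac{\fall{x+1}{j}-\fall{\ell-1}{j}}{\fall{n}{j}-\fall{x-k}{j}}.
\]
This is about $n/(jkt)$ for small $t$ and tiny for $t$ near $i$. So the largest terms sit in the interior, and with alternating signs you get massive cancellation, not one dominant extreme term. Already for $j=1$, Theorem~\ref{thm:jone} shows the total is negligible even compared with the $t=0$ term as $i\to\infty$.
\item The divided-difference plan is not set up. The weights $b_{n,n-tk}$ involve eigenvalue differences along $\ell-2+k\N$, whereas $\psi^{(\ell-1)}_0$ involves the $\Lambda_q$ along $\ell-1+k\N$, so there is no single node set.
\item Your heuristic for $j=1$ is inaccurate. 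For $j=1$ and $k\nmid\ell$, Theorem~\ref{thm:jone} gives $\alpha_{\ell-2+ik,\ell-1}\ne0$ for every $i\ge1$. The vanishing at $j=1$ comes from $k\mid\ell$, not from linearity of $x\mapsto x$ alone, and the predicted factor $(j-1)$ is a guess.
\end{itemize}

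\textbf{The missing idea is to avoid evaluating such sums at all.} The paper argues by contradiction. If the bandwidth were $N$, then $\chi=\psi_0\circ z^k$ on $\mathcal V_{\ell-1}$ would annihilate $P_{\ell-1+ik}$ for $i\ge N$. Using $z^k$ keeps everything inside one class, so no particular coefficient of $zP_n$ has to be controlled. Hence $\chi=\sum_{R<N}\chi(P_{\ell-1+Rk})\,\psi_R$.

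This identity is then tested on the monomials $e_i$ rather than on eigenpolynomials. By Lemma~\ref{lem:coordinate}, every term there is a single closed product, with no alternating sums. After dividing by $\psi_0(e_i)$:
\begin{itemize}
\item the left side is $-M_{i+1}/(\Lambda_{i+1}-\Lambda_0)\sim-(ki)^{\ell-j}$, which is unbounded;
\item each $\Pi_R(i)=\prod_{q=R+1}^{i}(\Lambda_q-\Lambda_0)/(\Lambda_q-\Lambda_R)$ converges, because $\sum_q q^{-j}<\infty$ for $j\ge2$.
\end{itemize}
That convergence is exactly where $j\ge2$ enters, and it is the step your proposal lacks.
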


\begin{proof}
Throughout we work on \(\mathcal V_p\) with
\[
p:=\ell-1\ \ (\ge j),
\]
and we abbreviate \(e_i=z^{p+ik}\), \(\Lambda_i=\fall{p+ik}{j}\),
\(M_i=\fall{p+ik}{\ell}\) and \(\psi_R=\psi^{(p)}_R\).  By
Lemma~\ref{lem:basis}, \(M_0=0\) and \(M_i>0\) for \(i\ge1\) (indeed
\(p+ik\ge\ell-1+k\ge\ell\)), and \(\{P_{p+ik}\}_{i\ge0}\) is a basis of
\(\mathcal V_p\).  Since \(p\ge j\), the sequence \((\Lambda_i)_{i\ge0}\) is
strictly increasing, so Lemma~\ref{lem:coordinate} applies; in particular
\(\Lambda_i-\Lambda_0>0\) for \(i\ge1\).

Suppose, for contradiction, that \eqref{eq:general-rec} holds for some
\(N\ge1\).  Multiplication by \(z\) then lowers the index \(R\) by at most
\(N-1\), so multiplication by \(z^k\) lowers it by at most \(k(N-1)\).
Since \(z^k P_{p+ik}\in\mathcal V_p\) and the basis of \(\mathcal V_p\) is a
subset of the basis \(\{P_n\}\) of \(\C[z]\), the coefficient of \(P_p\) in
\(z^k P_{p+ik}\), which is \(\chi(P_{p+ik})\) for
\[
\chi:=\psi_0\circ(\text{multiplication by }z^k)\in\mathcal V_p^{*},
\]
vanishes whenever
\[
p<(p+ik)-k(N-1),
\]
that is, whenever \(ik>k(N-1)\), and consequently whenever \(i\ge N\).
Both \(\chi\) and
\(\sum_{R=0}^{N-1}\chi(P_{p+Rk})\,\psi_R\) therefore take the same value on
every element of the basis \(\{P_{p+ik}\}_{i\ge0}\), whence
\[
\chi=\sum_{R=0}^{N-1}c_R\psi_R,
\qquad c_R:=\chi(P_{p+Rk}).
\]

Evaluate this identity at \(e_i\) with \(i\ge N\).  Since \(z^k e_i=e_{i+1}\),
Lemma~\ref{lem:coordinate} gives
\[
(-1)^{i+1}\prod_{q=1}^{i+1}\frac{M_q}{\Lambda_q-\Lambda_0}
=
\sum_{R=0}^{N-1}c_R(-1)^{i-R}
\prod_{q=R+1}^{i}\frac{M_q}{\Lambda_q-\Lambda_R}.
\]
Dividing by the nonzero quantity
\((-1)^i\prod_{q=1}^{i}M_q/(\Lambda_q-\Lambda_0)\), we obtain
\begin{equation}\label{eq:contradiction}
-\frac{M_{i+1}}{\Lambda_{i+1}-\Lambda_0}
=
\sum_{R=0}^{N-1}\widehat c_R\,\Pi_R(i)
\qquad (i\ge N),
\end{equation}
where the constants \(\widehat c_R\) and the products \(\Pi_R(i)\) are
\[
\widehat c_R
:=(-1)^R c_R\prod_{q=1}^{R}\frac{\Lambda_q-\Lambda_0}{M_q},
\qquad
\Pi_R(i):=\prod_{q=R+1}^{i}\frac{\Lambda_q-\Lambda_0}{\Lambda_q-\Lambda_R}.
\]
Note that \(\widehat c_R\) does not depend on \(i\).

The left-hand side of \eqref{eq:contradiction} tends to \(-\infty\).  Indeed
\(M_{i+1}=\fall{p+(i+1)k}{\ell}\sim(ki)^{\ell}\) and
\(\Lambda_{i+1}-\Lambda_0\sim(ki)^{j}\) as \(i\to\infty\), so the left-hand
side is asymptotic to \(-(ki)^{\ell-j}\), and \(\ell-j\ge1\) by hypothesis.

The right-hand side of \eqref{eq:contradiction}, on the contrary, stays
bounded.  Fix \(R\).  For \(q>R\) we have \(\Lambda_q>\Lambda_R\ge\Lambda_0\),
so every factor of \(\Pi_R(i)\) satisfies
\[
\frac{\Lambda_q-\Lambda_0}{\Lambda_q-\Lambda_R}
=1+\frac{\Lambda_R-\Lambda_0}{\Lambda_q-\Lambda_R}\ \ge\ 1,
\]
and \(\Pi_R(i)\) is nondecreasing in \(i\).  Moreover
\(\Lambda_q=\fall{p+qk}{j}\) is a polynomial in \(q\) of degree \(j\) with
leading coefficient \(k^j>0\), so there is \(C_R>0\) with
\(\Lambda_q-\Lambda_R\ge C_R q^{j}\) for all large \(q\).  Since \(j\ge2\),
\[
\sum_{q>R}\frac{\Lambda_R-\Lambda_0}{\Lambda_q-\Lambda_R}<\infty,
\]
and \(\log(1+x)\le x\) yields
\(\log\Pi_R(i)\le\text{const}<\infty\).  Hence each \(\Pi_R(i)\) increases to
a finite limit, and the right-hand side of \eqref{eq:contradiction} converges
to a finite number as \(i\to\infty\).  This contradiction proves the theorem.
\end{proof}

\begin{remark}\label{rem:boundary}
The hypothesis \(j\ge2\) is used only for the convergence of
\(\sum_q q^{-j}\).  For \(j=1\) one has \(\Lambda_q=p+qk\), so
\(\Pi_R(i)=\binom{i}{R}\) and the right-hand side of
\eqref{eq:contradiction} is a polynomial in \(i\) of degree at most
\(N-1\), whereas the left-hand side equals
\(-\prod_{a=1}^{\ell-1}\bigl((i+1)k+a\bigr)\), a polynomial of degree
\(\ell-1\).  There is then no contradiction; one even reads off that
\(N\ge\ell\), in accordance with Theorem~\ref{thm:main}.
\end{remark}

\begin{remark}
For \(m=\ell-1\) one has \(k=1\) and \(\mathcal V_p=z^{\ell-1}\C[z]\), and
\(\chi(P_{p+N})\) is the coefficient of \(P_{\ell-1}\) in
\(zP_{\ell-1+N}\).  The proof of Theorem~\ref{thm:jge2} shows that this
coefficient is nonzero for infinitely many \(N\) whenever \(1<j<\ell\).
\end{remark}

\section{The case \texorpdfstring{\(j=1\)}{j=1}}\label{sec:jone}

Throughout this section \(j=1\), so that \(\lambda_n=n\), and \(k=\ell-m\).
By Lemma~\ref{lem:eigenpolys},
\begin{equation}\label{eq:b-jone}
b_{n,n-tk}
=\frac{1}{k^t\,t!}\prod_{i=0}^{t-1}\fall{n-ik}{\ell}
\qquad\text{for all } t\ge0 .
\end{equation}

\begin{lemma}\label{lem:coordinate-jone}
Let \(R\ge0\), write \(R=r+sk\) with \(0\le r<k\) and \(s\ge0\), and let
\(\psi:=\psi^{(r)}_s\) be the coordinate functional of \(P_R\) in the basis
\(\{P_{r+ik}\}_{i\ge0}\) of \(\mathcal V_r\).  Then
\begin{equation}\label{eq:psi-jone}
\psi\bigl(z^{R+ik}\bigr)
=\frac{(-1)^i}{k^i\,i!}\prod_{a=1}^{i}\fall{R+ak}{\ell},
\qquad i\ge0 .
\end{equation}
\end{lemma}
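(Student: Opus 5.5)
This is a direct specialization of Lemma~\ref{lem:coordinate} to \(j=1\), with \(c=r\) and the functional \(\psi^{(r)}_s\).

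\emph{Hypotheses.} Since \(j=1\), the \(\Lambda^{(r)}_q=r+qk\) are pairwise distinct, so Lemma~\ref{lem:coordinate} applies on \(\mathcal V_r\) with \(0\le r\le k-1\le\ell-1\). Note that \(z^{R+ik}=e^{(r)}_{s+i}\).

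\emph{Applying the lemma.} Formula \eqref{eq:coordinate} with index \(s\) evaluated at \(e_{s+i}\) gives
\[
\psi\bigl(z^{R+ik}\bigr)=(-1)^i\prod_{q=s+1}^{s+i}\frac{M_q}{\Lambda_q-\Lambda_s}.
\]
For \(i=0\) this is the value \(1\) from \eqref{eq:coordinate}, which agrees with the empty product in \eqref{eq:psi-jone}.

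\emph{Computing the factors.} Reindex by \(q=s+a\) with \(1\le a\le i\). Then
\[
M_q=\fall{r+(s+a)k}{\ell}=\fall{R+ak}{\ell},
\qquad
\Lambda_q-\Lambda_s=(r+(s+a)k)-(r+sk)=ak.
\]
Hence
\[
\prod_{a=1}^{i}\frac{1}{ak}=\frac{1}{k^i\,i!},
\]
which yields \eqref{eq:psi-jone}.

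\emph{Possible obstacle.} The only point to check is that Lemma~\ref{lem:coordinate} does not require \(M_q\ne0\). That is true, as the paper notes after Lemma~\ref{lem:basis}, so small \(r<m\), where some \(M_q\) vanish, need no separate treatment. The formula simply gives \(0\) in that case, consistently with \eqref{eq:coordinate}.
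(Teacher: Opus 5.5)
Your proposal is correct and matches the paper's proof. Like the paper, you specialize Lemma~\ref{lem:coordinate} to \(c=r\) with \(j=1\), use \(\Lambda_q-\Lambda_s=(q-s)k\), and reindex \(q=s+a\) so that \(M_{s+a}=\fall{R+ak}{\ell}\); your remark that vanishing \(M_q\) needs no separate treatment agrees with the paper's comment after Lemma~\ref{lem:basis}.
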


\begin{proof}
Since \(0\le r<k\le\ell\), Lemmas~\ref{lem:basis} and \ref{lem:coordinate}
apply to \(\mathcal V_r\) (the hypothesis of Lemma~\ref{lem:coordinate} holds
because \(j=1\)).  With \(\Lambda_q=\lambda_{r+qk}=r+qk\) we get
\(\Lambda_q-\Lambda_s=(q-s)k\), and \eqref{eq:coordinate} gives, for
\(i\ge0\),
\[
\psi\bigl(z^{R+ik}\bigr)
=\psi\bigl(e^{(r)}_{s+i}\bigr)
=(-1)^{i}\prod_{q=s+1}^{s+i}\frac{\fall{r+qk}{\ell}}{(q-s)k}
=\frac{(-1)^i}{k^i\,i!}\prod_{a=1}^{i}\fall{R+ak}{\ell},
\]
where we substituted \(q=s+a\) and used \(r+(s+a)k=R+ak\).
\end{proof}

\begin{theorem}\label{thm:jone}
Let \(j=1\) and \(k=\ell-m\).  Then \(\alpha_{n,R}=0\) unless
\(k\mid(n+1-R)\), and for all \(\tau\ge1\) and all \(n\) with
\(R:=n+1-\tau k\ge0\),
\begin{equation}\label{eq:closed-alpha}
\alpha_{n,\,n+1-\tau k}
=
\frac{(-1)^\tau}{\tau!\,k^\tau}
\left(\prod_{a=0}^{\tau-1}(\ell-ak)\right)
\left(\prod_{c=0}^{\tau-1}\fall{n-ck}{\ell-1}\right).
\end{equation}
\end{theorem}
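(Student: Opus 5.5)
By Proposition~\ref{prop:congruence}, \(\alpha_{n,R}=0\) unless \(R\equiv n+1\pmod k\), so only \eqref{eq:closed-alpha} needs proof. Fix \(\tau\ge1\) and put \(R=n+1-\tau k\ge0\). Then \(R\equiv n+1\pmod k\), so \(zP_n\in\mathcal V_r\) with \(r=R\bmod k\), and \(\alpha_{n,R}=\psi(zP_n)\), where \(\psi\) is the functional of Lemma~\ref{lem:coordinate-jone}.

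\textbf{Step 1: reduce to a finite sum.} Expand \(zP_n\) with \eqref{eq:b-jone}:
\[
zP_n=\sum_{t\ge0}b_{n,n-tk}\,z^{n+1-tk}.
\]
Here \(z^{n+1-tk}=z^{R+(\tau-t)k}\). Terms with \(t>\tau\) are multiples of \(e^{(r)}_i\) with \(i\) below the index of \(P_R\), so \(\psi\) kills them by Lemma~\ref{lem:coordinate}. Applying \eqref{eq:psi-jone} with \(i=\tau-t\) gives
\[
\alpha_{n,R}=\sum_{t=0}^{\tau}\frac{(-1)^{\tau-t}}{k^\tau\,t!\,(\tau-t)!}
\prod_{c=0}^{t-1}\fall{n-ck}{\ell}\prod_{b=t}^{\tau-1}\fall{n+1-bk}{\ell}.
\]
In the second product I reindex with \(b=\tau-a\), using \(R+ak=n+1-(\tau-a)k\).

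\textbf{Step 2: extract the common factor.} Use the two splittings
\[
\fall{n-ck}{\ell}=\fall{n-ck}{\ell-1}\,(n-ck-\ell+1),
\qquad
\fall{n+1-bk}{\ell}=(n+1-bk)\,\fall{n-bk}{\ell-1}.
\]
Every index \(c\in\{0,\dots,\tau-1\}\) occurs exactly once across the two products. So the common factor \(\prod_{c=0}^{\tau-1}\fall{n-ck}{\ell-1}\) comes out, leaving
\[
\frac{1}{k^\tau\tau!}\sum_{t=0}^{\tau}\binom{\tau}{t}(-1)^{\tau-t}\prod_{c<t}(n+1-ck-\ell)\prod_{b=t}^{\tau-1}(n+1-bk).
\]

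\textbf{Step 3: evaluate the sum by Chu--Vandermonde.} This is the main step. Set \(u=(n+1)/k\) and \(\lambda=\ell/k\), and pull \(k^\tau\) out of the \(\tau\) linear factors. The products become \(\fall{u-\lambda}{t}\) and \(\fall{u-t}{\tau-t}\). The key observation is
\[
(-1)^{\tau-t}\fall{u-t}{\tau-t}=\fall{\tau-1-u}{\tau-t},
\]
whose top argument \(y=\tau-1-u\) does not depend on \(t\). Chu--Vandermonde for falling factorials, \(\sum_t\binom{\tau}{t}\fall{x}{t}\fall{y}{\tau-t}=\fall{x+y}{\tau}\), then gives
\[
\fall{\tau-1-\lambda}{\tau}=(-1)^\tau\prod_{a=0}^{\tau-1}(\lambda-a).
\]
Multiplying back by \(k^\tau\) yields \((-1)^\tau\prod_{a=0}^{\tau-1}(\ell-ak)\), and combining with Step 2 gives exactly \eqref{eq:closed-alpha}.

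The remaining care is in the boundary cases. Small \(n\), where \(P_n=z^n\), and the case \(r<m\) are both already covered by the "for all \(t\ge0\)" clause of \eqref{eq:b-jone}. Lemma~\ref{lem:coordinate-jone} holds for every residue \(r\), so no separate treatment is needed.
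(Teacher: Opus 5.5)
Your proof is correct and follows the paper's route. Like the paper, you apply the coordinate functional $\psi$ of Lemma~\ref{lem:coordinate-jone} to $zP_n$, truncate the sum to $t\le\tau$, and finish with Chu--Vandermonde.

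The one difference works in your favour. The paper pulls out the common factor by dividing by $\prod_{c}\fall{n+1-ck}{\ell}$ and then needs $(-w)^{(\tau)}\ne0$. For both it must first argue that $\alpha_{n,n+1-\tau k}$ is polynomial in $n$ and restrict to $n\ge m+\tau k$. You instead split $\fall{n-ck}{\ell}=\fall{n-ck}{\ell-1}\,(n+1-ck-\ell)$ and $\fall{n+1-bk}{\ell}=(n+1-bk)\,\fall{n-bk}{\ell-1}$, which extracts $\prod_{c=0}^{\tau-1}\fall{n-ck}{\ell-1}$ with no division. The falling-factorial form of Chu--Vandermonde is then a polynomial identity, so your computation holds for every $n$ directly and makes the paper's polynomiality detour unnecessary.
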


\begin{proof}
The congruence restriction is Proposition~\ref{prop:congruence}.

Fix \(\tau\ge1\) and let \(n\) satisfy \(R=n+1-\tau k\ge0\).  Since
\(zP_n=\sum_{t\ge0}b_{n,n-tk}z^{n+1-tk}\) and \(zP_n\in\mathcal V_r\) with
\(r\equiv R \pmod k\), applying the functional \(\psi\) of
Lemma~\ref{lem:coordinate-jone} gives
\begin{equation}\label{eq:alpha-as-sum}
\alpha_{n,R}
=\sum_{t=0}^{\tau}b_{n,n-tk}\,\psi\bigl(z^{n+1-tk}\bigr),
\end{equation}
the terms with \(t>\tau\) being zero because \(\psi\) annihilates
\(z^{R'}\) for \(R'<R\), \(R'\equiv R\).  By \eqref{eq:b-jone} and
\eqref{eq:psi-jone}, each summand of \eqref{eq:alpha-as-sum} is a polynomial
function of \(n\) (recall from Lemma~\ref{lem:eigenpolys} that
\eqref{eq:b-jone} is valid for every \(t\ge0\) when \(j=1\)).  Hence
\(n\mapsto\alpha_{n,n+1-\tau k}\) agrees with a polynomial on
\(\{n:n\ge\tau k-1\}\).  The right-hand side of \eqref{eq:closed-alpha} is a
polynomial in \(n\) as well, so it suffices to prove
\eqref{eq:closed-alpha} for all sufficiently large \(n\).  We may therefore
assume
\begin{equation}\label{eq:n-large}
n\ge m+\tau k,
\end{equation}
which guarantees \(n+1-ck\ge n+1-(\tau-1)k\ge\ell+1\) for \(0\le c\le\tau-1\);
in particular \(\fall{n+1-ck}{\ell}\ne0\) for those \(c\).

Substituting \eqref{eq:b-jone} and \eqref{eq:psi-jone} into
\eqref{eq:alpha-as-sum}, and using \(R+ak=n+1-(\tau-a)k\) to rewrite
\(\prod_{a=1}^{\tau-t}\fall{R+ak}{\ell}=\prod_{c=t}^{\tau-1}\fall{n+1-ck}{\ell}\),
we get
\[
\alpha_{n,n+1-\tau k}
=\sum_{t=0}^{\tau}
\frac{1}{k^t\,t!}\prod_{i=0}^{t-1}\fall{n-ik}{\ell}
\cdot
\frac{(-1)^{\tau-t}}{k^{\tau-t}(\tau-t)!}
\prod_{c=t}^{\tau-1}\fall{n+1-ck}{\ell}.
\]
Using \(\frac{1}{t!(\tau-t)!}=\frac{1}{\tau!}\binom{\tau}{t}\), pulling out
the (nonzero) factor \(\prod_{c=0}^{\tau-1}\fall{n+1-ck}{\ell}\), and
observing that
\[
\frac{\fall{n-ik}{\ell}}{\fall{n+1-ik}{\ell}}
=\frac{n+1-\ell-ik}{n+1-ik},
\]
we arrive at
\begin{equation}\label{eq:pre-vandermonde}
\alpha_{n,n+1-\tau k}
=
\frac{(-1)^\tau}{k^\tau\tau!}
\left(\prod_{c=0}^{\tau-1}\fall{n+1-ck}{\ell}\right)
\sum_{t=0}^{\tau}(-1)^t\binom{\tau}{t}
\prod_{i=0}^{t-1}\frac{n+1-\ell-ik}{n+1-ik}.
\end{equation}

Put
\[
w:=\frac{n+1}{k},
\qquad
\sigma:=\frac{\ell}{k}
\]
(here \(\sigma\) need not be an integer).  Writing
\(n+1-ik=k(w-i)\) and \(n+1-\ell-ik=k(w-\sigma-i)\), and using the rising
factorials \(x^{(t)}:=x(x+1)\cdots(x+t-1)\), we obtain
\[
\prod_{i=0}^{t-1}\frac{n+1-\ell-ik}{n+1-ik}
=\prod_{i=0}^{t-1}\frac{w-\sigma-i}{w-i}
=\frac{(\sigma-w)^{(t)}}{(-w)^{(t)}},
\qquad
(-1)^t\binom{\tau}{t}=\frac{(-\tau)^{(t)}}{t!},
\]
so that the finite sum in \eqref{eq:pre-vandermonde} equals the terminating
Gauss series
\[
\sum_{t=0}^{\tau}\frac{(-\tau)^{(t)}}{t!}\,
\frac{(\sigma-w)^{(t)}}{(-w)^{(t)}}
={}_2F_1\bigl(-\tau,\sigma-w;-w;1\bigr).
\]
Here the Gauss hypergeometric series terminates because its first upper
parameter is the nonpositive integer \(-\tau\); for this notation and the
Chu--Vandermonde summation used below, see, for example,
\cite[Appendix~III]{Slater}.  By \eqref{eq:n-large} we have \(w>\tau-1\), so
\((-w)^{(\tau)}\ne0\) and the Chu--Vandermonde identity applies:
\[
{}_2F_1\bigl(-\tau,\sigma-w;-w;1\bigr)
=\frac{\bigl((-w)-(\sigma-w)\bigr)^{(\tau)}}{(-w)^{(\tau)}}
=\frac{(-\sigma)^{(\tau)}}{(-w)^{(\tau)}}
=\frac{\prod_{a=0}^{\tau-1}(\sigma-a)}{\prod_{a=0}^{\tau-1}(w-a)} .
\]
Multiplying numerator and denominator by \(k^\tau\) turns the last fraction
into
\[
\frac{\prod_{a=0}^{\tau-1}(\ell-ak)}{\prod_{a=0}^{\tau-1}(n+1-ak)} .
\]
Substituting this into \eqref{eq:pre-vandermonde} and using
\(\fall{y}{\ell}/y=\fall{y-1}{\ell-1}\) with \(y=n+1-ck\) gives
\eqref{eq:closed-alpha}.
\end{proof}

\begin{corollary}\label{cor:classification}
For the operator \eqref{eq:T}, the sequence of normalized monic
eigenpolynomials satisfies a recurrence relation with a fixed number of
terms if and only if \(j=1\) and \(k=\ell-m\) divides \(\ell\).  In that
case \eqref{eq:main-rec} and \eqref{eq:main-gamma} hold with
\(\sigma=\ell/k\), and the optimal bandwidth is \(N=\ell\).
\end{corollary}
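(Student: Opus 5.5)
The plan is to assemble Corollary~\ref{cor:classification} from Theorem~\ref{thm:jge2} and Theorem~\ref{thm:jone}, treating the necessity and sufficiency of each condition in turn.

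\emph{Necessity of \(j=1\).} If \(j\ge2\), Theorem~\ref{thm:jge2} already shows that no fixed-term recurrence exists. By Remark~\ref{rem:normalization}, this conclusion does not depend on the normalization. So we may assume \(j=1\) from now on.

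\emph{Exact form of the coefficients.} With \(j=1\), Theorem~\ref{thm:jone} gives every coefficient \(\alpha_{n,n+1-\tau k}\) in closed form, and all other \(\alpha_{n,R}\) vanish. The factor \(\prod_{a=0}^{\tau-1}(\ell-ak)\) is the quantity to control.
\begin{itemize}
\item Suppose \(k\mid\ell\), and put \(\sigma=\ell/k\). For \(\tau>\sigma\) the factor with \(a=\sigma\) is zero, so \(\gamma_n^{(\tau)}=0\). Hence \eqref{eq:main-rec} holds with the coefficients \eqref{eq:main-gamma}, and at most \(\sigma\) lower terms occur, giving a fixed-term recurrence.
\item Suppose \(k\nmid\ell\). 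Then \(\ell-ak\neq0\) for every \(a\), so the first product never vanishes. The second product \(\prod_{c=0}^{\tau-1}\fall{n-ck}{\ell-1}\) is nonzero as soon as \(n-(\tau-1)k\ge\ell-1\). So for every \(\tau\) we can pick \(n\) large enough that \(\alpha_{n,n+1-\tau k}\neq0\).
\end{itemize}
In the second case the gap \(n-R=\tau k-1\) is unbounded, which contradicts \eqref{eq:general-rec}. This gives the necessity of \(k\mid\ell\).

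\emph{Optimal bandwidth.} When \(k\mid\ell\), the lowest surviving term has \(\tau=\sigma\), so \(R=n+1-\ell\) and \(n-R=\ell-1\). Thus \(N=\ell\) suffices. For optimality we exhibit \(n\) with \(\gamma_n^{(\sigma)}\neq0\): the first product has factors \(\ell-ak\) with \(a<\sigma\), all positive. The second product is nonzero once \(n-(\sigma-1)k\ge\ell-1\), which holds for all large \(n\). Therefore \(N=\ell\) is optimal, consistent with Remark~\ref{rem:boundary}.

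The only delicate point is this nonvanishing check, namely making sure the falling factorials do not vanish for the chosen \(n\) and that \(R\ge0\). Taking \(n\) large handles both. Everything else is bookkeeping from the two theorems.
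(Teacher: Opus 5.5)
Your proposal is correct and takes essentially the same route as the paper. You reduce to \(j=1\) via Theorem~\ref{thm:jge2}, then read off from the closed form in Theorem~\ref{thm:jone} that the factor \(\prod_{a=0}^{\tau-1}(\ell-ak)\) kills every \(\tau>\sigma\) exactly when \(k\mid\ell\), while the falling-factorial product is nonzero for \(n\ge\ell-1+(\tau-1)k\) (a threshold that also gives \(R\ge m\ge0\)); this yields both the unbounded index drop when \(k\nmid\ell\) and the optimality of \(N=\ell\) through \(\gamma_n^{(\sigma)}\neq0\).
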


\begin{proof}
By Theorem~\ref{thm:jge2} we may assume \(j=1\) and use
Theorem~\ref{thm:jone}.  By Proposition~\ref{prop:congruence}, the only
possibly nonzero coefficients are the \(\alpha_{n,n+1-\tau k}\) with
\(\tau\ge1\), and these are given by \eqref{eq:closed-alpha}.  The factor
\begin{equation}\label{eq:the-factor}
\prod_{a=0}^{\tau-1}(\ell-ak)
\end{equation}
vanishes if and only if \(\ell=ak\) for some \(0\le a\le\tau-1\), that is,
if and only if \(k\mid\ell\) and \(\tau\ge\sigma+1\), where
\(\sigma=\ell/k\).  Furthermore
\(\prod_{c=0}^{\tau-1}\fall{n-ck}{\ell-1}\ne0\) as soon as
\(n-(\tau-1)k\ge\ell-1\).

If \(k\nmid\ell\), then \eqref{eq:the-factor} never vanishes, so for every
\(\tau\ge1\) we get \(\alpha_{n,n+1-\tau k}\ne0\) for all
\(n\ge\ell-1+(\tau-1)k\).  Since the index drop \(n-(n+1-\tau k)=\tau k-1\)
is unbounded, the lower bandwidth is infinite.

If \(k\mid\ell\), then \eqref{eq:the-factor} vanishes for every
\(\tau\ge\sigma+1\), so \(\alpha_{n,R}=0\) whenever
\(n-R\ge\sigma k=\ell\); this is \eqref{eq:general-rec} with \(N=\ell\), and
the surviving coefficients \(\gamma^{(q)}_n=\alpha_{n,n+1-qk}\),
\(1\le q\le\sigma\), are exactly \eqref{eq:main-gamma}.  Finally, for
\(\tau=\sigma\) one has
\(\prod_{a=0}^{\sigma-1}(\ell-ak)=k^\sigma\prod_{a=0}^{\sigma-1}(\sigma-a)
=k^\sigma\sigma!\), whence
\[
\gamma_n^{(\sigma)}
=(-1)^\sigma\prod_{c=0}^{\sigma-1}\fall{n-ck}{\ell-1}\ne0
\qquad\text{for } n\ge\ell-1+(\sigma-1)k=2\ell-k-1 .
\]
The corresponding index drop is \(\sigma k-1=\ell-1\), so \(N=\ell\) cannot
be lowered.
\end{proof}

Theorem~\ref{thm:main} is the conjunction of
Corollary~\ref{cor:classification} and Theorem~\ref{thm:jge2}.

\begin{corollary}\label{cor:prime}
Fix \(\ell\ge2\).  Then \(m=0\) and \(m=\ell-1\) are the only admissible
values of \(m\) in Theorem~\ref{thm:main} if and only if \(\ell\) is prime.
\end{corollary}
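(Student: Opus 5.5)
By Theorem~\ref{thm:main}, for fixed \(\ell\) the admissible values of \(m\) are exactly those \(m\in\{0,\dots,\ell-1\}\) for which \(k=\ell-m\) divides \(\ell\). The map \(m\mapsto k=\ell-m\) is a bijection from \(\{0,\dots,\ell-1\}\) onto \(\{1,\dots,\ell\}\). Hence the set of admissible \(m\) corresponds bijectively to the set of positive divisors of \(\ell\), via \(m=\ell-k\) with \(k\mid\ell\). The two extreme divisors \(k=\ell\) and \(k=1\) give \(m=0\) and \(m=\ell-1\). These two values are distinct because \(\ell\ge2\), and they are always admissible.

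So the statement reduces to an elementary fact: \(\ell\ge2\) has no divisors other than \(1\) and \(\ell\) if and only if \(\ell\) is prime. I would argue both directions explicitly.

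\emph{Prime implies only two values.} If \(\ell\) is prime, its only divisors are \(1\) and \(\ell\), so the only admissible values are \(m=\ell-1\) and \(m=0\).

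\emph{Composite implies a third value.} If \(\ell\) is composite, choose a divisor \(d\) with \(1<d<\ell\). Then \(m=\ell-d\) satisfies \(0<m<\ell-1\), so it differs from both extremes. It is admissible because \(k=d\mid\ell\), with \(j=1\).

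There is no real obstacle here. The only point needing care is to state that admissibility is taken with \(j=1\) (forced by Theorem~\ref{thm:main}), and to note that \(\ell\ge2\) guarantees \(m=0\) and \(m=\ell-1\) are two distinct values.
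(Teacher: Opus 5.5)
Your proposal is correct and follows essentially the same route as the paper: admissible \(m\) correspond to divisors \(k\mid\ell\) via \(m=\ell-k\), so the claim reduces to the fact that \(\ell\ge2\) has no divisors other than \(1\) and \(\ell\) exactly when \(\ell\) is prime. Your added remarks that \(\ell\ge2\) makes \(m=0\) and \(m=\ell-1\) distinct, and that \(j=1\) is forced, are correct details the paper leaves implicit.
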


\begin{proof}
The admissible \(m\) are exactly the \(m=\ell-k\) with \(k\mid\ell\),
\(1\le k\le\ell\).  The divisors of \(\ell\) in \(\{1,\dots,\ell\}\) reduce
to \(\{1,\ell\}\) precisely when \(\ell\) is prime, and \(k=1\), \(k=\ell\)
correspond to \(m=\ell-1\), \(m=0\).
\end{proof}

\begin{remark}
The two extreme cases read as follows.  For \(m=0\) we have \(k=\ell\),
\(\sigma=1\), and \eqref{eq:main-rec} becomes the sparse relation
\[
zP_n=P_{n+1}-\fall{n}{\ell-1}\,P_{n+1-\ell}.
\]
For \(m=\ell-1\) we have \(k=1\) and \(\sigma=\ell\); since
\(\prod_{a=0}^{q-1}(\ell-a)=\ell!/(\ell-q)!\), formula
\eqref{eq:main-gamma} with \(q=s+1\) reads
\[
\alpha_{n,n-s}
=(-1)^{s+1}\binom{\ell}{s+1}\prod_{c=0}^{s}\fall{n-c}{\ell-1},
\qquad 0\le s\le\ell-1,
\]
so that \eqref{eq:main-rec} becomes an \((\ell+1)\)-term recurrence.  In both
cases the optimal bandwidth is \(N=\ell\), as it is for every admissible
\(m\).
\end{remark}

For convenience, we recall explicitly the two conjectures from
\cite{HoShTa} that are relevant here.  To avoid a clash with the notation
\(k,\ell\) used throughout this paper, we denote the order of the
differential operator by \(r\) and a divisor of \(r\) by \(s\).

\medskip\noindent
\textbf{Conjecture 1.9 of \cite{HoShTa}.}
For any irreducible differential operator \(L\) of order \(r\) solving the
generalized algebraic Bochner--Krall problem, the order of the corresponding
difference operator \(\Lambda\) is also \(r\).

\medskip\noindent
\textbf{Conjecture 1.10 of \cite{HoShTa}.}
For any positive integer \(r\), every irreducible differential operator of
order \(r\) solving the generalized algebraic Bochner--Krall problem belongs
to one of the following two types:
\begin{enumerate}
\item
\[
L=\sum_{j=1}^{r} a_j z^{j-1}\partial_z^j+z\partial_z,
\qquad a_j\in\C,\quad a_r\ne0,
\]
which generates the so-called \((r-1)\)-orthogonal polynomials;
\item
\[
L=q'(G)G+z\partial_z,
\]
where \(q\) is a complex polynomial of degree \(r/s\), with no constant
term, \(s\) is any divisor of \(r\), and
\[
G=\left(\sum_{u=0}^{s-1}a_u(z\partial_z)^u\right)\partial_z,
\qquad a_u\in\C,\quad a_{s-1}\ne0.
\]
\end{enumerate}

\begin{proposition}[Relation with Conjectures~1.9 and~1.10 of
\cite{HoShTa}]\label{prop:conjectures}
Assume that \(j=1\) and \(k=\ell-m\) divides \(\ell\), and put
\(\sigma=\ell/k\) and \(D=z\partial_z\).  Define
\begin{equation}\label{eq:G-def}
G=p(D)\partial_z,
\qquad
p(t):=\prod_{r=1}^{\sigma-1}\bigl(t-(rk-1)\bigr),
\qquad
q(t):=\frac{t^k}{k},
\end{equation}
where the empty product is \(1\).  Then \(G\) has differential order
\(\sigma\) and
\begin{equation}\label{eq:type-two}
T=z\partial_z+q'(G)G.
\end{equation}
Since \(p\) is monic of degree \(\sigma-1\), while \(q\) has degree
\(k=\ell/\sigma\) and zero constant term, every admissible operator in
Theorem~\ref{thm:main} has precisely the Type~(2) form in
Conjecture~1.10 of \cite{HoShTa}; this is the construction studied in
\cite[Theorem~2.3]{Horozov}, with the divisor there equal to our \(\sigma\).  Moreover, the associated
difference operator has order \(\ell\).  Hence this family satisfies the
order equality predicted by Conjecture~1.9, without any irreducibility
assumption.
\end{proposition}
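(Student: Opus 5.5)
The plan is to verify the factorization \eqref{eq:type-two} by comparing both sides on the monomial basis $\{z^n\}_{n\ge0}$. Two linear differential operators with polynomial coefficients that agree on every monomial are equal. Since $q(t)=t^k/k$, we have $q'(t)=t^{k-1}$, so $q'(G)G=G^k$. By \eqref{eq:Tmon} with $j=1$, the identity \eqref{eq:type-two} is therefore equivalent to
\[
G^k z^n=\fall{n}{\ell}\,z^{n-k}\qquad(n\ge0),
\]
because $z\partial_z$ already accounts for the term $\fall{n}{1}z^n=nz^n$.

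First I compute the action of $G$ on monomials. Since $\partial_z z^n=nz^{n-1}$ and $D z^{n-1}=(n-1)z^{n-1}$, we get $p(D)z^{n-1}=p(n-1)z^{n-1}$. From the definition of $p$,
\[
p(n-1)=\prod_{r=1}^{\sigma-1}\bigl(n-1-(rk-1)\bigr)=\prod_{r=1}^{\sigma-1}(n-rk).
\]
Hence
\[
Gz^n=\prod_{r=0}^{\sigma-1}(n-rk)\,z^{n-1}.
\]
This also covers $n=0$, where the factor $r=0$ vanishes.

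Next I iterate this $k$ times. The exponent drops by one at each step, so the $i$-th application (for $i=0,\dots,k-1$) contributes $\prod_{r=0}^{\sigma-1}(n-i-rk)$. This gives
\[
G^kz^n=\prod_{i=0}^{k-1}\prod_{r=0}^{\sigma-1}(n-i-rk)\,z^{n-k}.
\]
The key combinatorial observation is that the map $(i,r)\mapsto i+rk$, with $0\le i<k$ and $0\le r<\sigma$, is a bijection onto $\{0,1,\dots,\sigma k-1\}=\{0,\dots,\ell-1\}$; this is just division with remainder by $k$. The double product is therefore $\prod_{u=0}^{\ell-1}(n-u)=\fall{n}{\ell}$, which proves \eqref{eq:type-two}. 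This is the only nontrivial step; note that it is exactly where $k\mid\ell$ is used, since $\sigma$ must be an integer for $p$ to be defined.

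For the differential order of $G$, I write $p(D)=D^{\sigma-1}+(\text{lower powers of }D)$. Since $D^a=z^a\partial_z^a+(\text{lower order})$, the operator $p(D)$ has order exactly $\sigma-1$ with leading term $z^{\sigma-1}\partial_z^{\sigma-1}$. Consequently $G=p(D)\partial_z$ has order $\sigma$ with leading term $z^{\sigma-1}\partial_z^\sigma$. The remaining assertions are then bookkeeping:
\begin{itemize}
\item $p$ is monic of degree $\sigma-1$, so the coefficient $a_{\sigma-1}$ in Type~(2) equals $1\neq0$;
\item $q$ has degree $k=\ell/\sigma$ and no constant term, with $\sigma$ playing the role of the divisor $s$ of the order $r=\ell$;
\item the difference-operator order equals $\ell$ by Corollary~\ref{cor:classification} (optimal bandwidth $N=\ell$), which is the order equality of Conjecture~1.9.
\end{itemize}
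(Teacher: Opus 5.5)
Your proof is correct and is essentially the paper's argument. Both reduce \eqref{eq:type-two} to $G^k=z^m\partial_z^\ell$, equivalently $G^kz^n=\fall{n}{\ell}z^{n-k}$, and close with the same division-with-remainder bijection; you evaluate on monomials (valid because differential operators with polynomial coefficients are determined by their action on $\C[z]$), while the paper uses $\partial_z f(D)=f(D+1)\partial_z$ to write $G^k=\prod_{i=0}^{k-1}p(D+i)\,\partial_z^k$ and then $D(D-1)\cdots(D-m+1)=z^m\partial_z^m$. Your check that $G$ has order $\sigma$, with leading term $z^{\sigma-1}\partial_z^{\sigma}$, also supplies a point the paper states without proof.
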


\begin{proof}
The commutation relation
\(
\partial_z f(D)=f(D+1)\partial_z
\)
for polynomials \(f\) gives
\[
G^k=\prod_{i=0}^{k-1}p(D+i)\,\partial_z^k.
\]
The roots of the polynomial in front of \(\partial_z^k\) are
\[
rk-1-i,
\qquad 1\le r\le\sigma-1,\quad 0\le i\le k-1.
\]
As a multiset these are precisely
\(\{0,1,\dots,(\sigma-1)k-1\}=\{0,1,\dots,m-1\}\), each occurring once;
when \(m=0\), both multisets are empty.  Consequently
\[
\prod_{i=0}^{k-1}p(D+i)
=D(D-1)\cdots(D-m+1)=z^m\partial_z^m,
\]
and therefore
\[
G^k=z^m\partial_z^{m+k}=z^m\partial_z^\ell.
\]
Since \(q'(t)t=t^k\), identity \eqref{eq:type-two} follows.

Finally, if \(S\) denotes the forward shift in the degree index, then
\eqref{eq:main-rec} corresponds to
\[
\Lambda=S+\sum_{q=1}^{\sigma}\gamma_n^{(q)}S^{1-qk}.
\]
Its highest shift is \(+1\) and its lowest shift is
\(1-\sigma k=1-\ell\).  By the last part of the proof of
Corollary~\ref{cor:classification}, the coefficient of the lowest shift is
nonzero for all sufficiently large \(n\).  Thus the difference operator has
order \(1-(1-\ell)=\ell\).
\end{proof}

\begin{remark}
Proposition~\ref{prop:conjectures} proves the precise structural inclusion
asserted by Conjecture~1.10.  It does not assert that every operator in this
family is irreducible in the sense of \cite{HoShTa}; irreducibility is not
needed for the classification obtained here.
\end{remark}

\medskip \noindent
\emph{Acknowledgements.} Some calculations have been checked and carried out using GPT 5.6 Sol which also  polished the original manuscript.

\end{document}